\documentclass[
 reprint,
superscriptaddress,
 amsmath,amssymb,
 aps,
prx,
]{revtex4-2}

\usepackage{graphicx}
\usepackage{dcolumn}
\usepackage{bm}

\usepackage{amsmath,amssymb,amsfonts,amsbsy,amsthm}
\usepackage{tikz-cd}
\usetikzlibrary{shapes}
\usepackage{amsbsy}
\usepackage{amscd}
\usetikzlibrary{matrix,arrows,decorations.pathmorphing}
\usepackage{graphicx, caption} 
\usepackage{float}
\usepackage{setspace}
\usepackage[margin=1.0in]{geometry}
\usepackage{tikz}
\usepackage{comment}
\usepackage{wrapfig}
\usepackage{hyperref}
\usepackage{enumitem}
\usepackage{nicematrix, bigstrut} 

\usepackage{nicematrix}

\usepackage[justification=raggedright,singlelinecheck=false]{caption}

\newtheorem{theorem}{Theorem}
\newtheorem{proposition}{Proposition}
\newtheorem{lemma}{Lemma}

\theoremstyle{definition}
\newtheorem{definition}{Definition}
\newtheorem*{remark}{Remark}
\newtheorem{example}{Example}

\newcommand{\bra}[1]{\langle #1|}
\newcommand{\ket}[1]{|#1\rangle}

\newcommand{\ip}[2]{\langle #1|#2\rangle}
\newcommand{\op}[2]{|#1\rangle \langle #2|}

\newcommand{\mbf}{\mathbf}
\newcommand{\mbb}{\mathbb}
\newcommand{\mc}{\mathcal}
\newcommand{\msf}{\mathsf}

\newcommand{\tr}{\textrm{Tr}}

\newcommand{\supp}{\text{supp}}

\newcommand{\wt}{\widetilde}
\newcommand{\rk}{\text{rk}}

\newcommand{\Q}{\mathsf{Q}}
\newcommand{\A}{\mathsf{A}}
\newcommand{\B}{\mathsf{B}}

\newcommand{\D}{\mathsf{D}}
\newcommand{\E}{\mathsf{E}}

\newcommand{\rmv}[1]{}

\def\b0{{\bf 0}}

\def\){{\right)}}
\def\({{\left(}}

\newcommand{\R}{\mathsf{R}}

\usepackage{ulem}

\definecolor{cool_green}{rgb}{0.0, 0.5, 0.0}

\newcommand{\sarah}[1]{{\color{blue} #1}}
\newcommand{\todo}[1]{{\color{red} #1}}

\begin{document}

\preprint{APS/123-QED}

\title{Quantum Secret Sharing with a Helper and Programmable Access Structures}

\author{Eric Chitambar}
\email{echitamb@illinois.edu}
 \affiliation{Department of Electrical and Computer Engineering, University of Illinois at Urbana-Champaign}
\author{Sarah Hagen}%
 \email{shagen2@illinois.edu}
\affiliation{%
 Department of Physics, University of Illinois at Urbana-Champaign
}%

\author{David W. Kribs}
 \email{dkribs@uoguelph.ca}
\affiliation{Department of Mathematics \& Statistics, University of Guelph, ON Canada N1G 2W1
}%

\author{Mike I. Nelson}
\email{mnelsonqcr@gmail.com}
\affiliation{Department of Electrical and Computer Engineering, University of Illinois at Urbana-Champaign}%

\author{Andrew Nemec}
\email{andrew.nemec@utdallas.edu}
\affiliation{Department of Computer Science, University of Texas at Dallas
}%

\date{\today}

\begin{abstract}

Quantum secret sharing (QSS) is a process in which the state of a quantum system is partitioned into multiple shares, allowing only specific subsets of shareholders to reconstruct the state, while others gain no information about its identity.
In this work, we propose a variant of QSS, called ``helper QSS,'' which designates a special shareholder whose participation with any other group of parties is sufficient for recovering the state.
We present different families of helper codes that function for an arbitrary number of parties and system sizes.
As an application, we introduce the idea of a programmable access structure in QSS, which allows for a third-party programmer to independently choose the access structure of the code even after the shares are delivered to all the shareholders.  This choice is made \textit{blindly}, meaning that the programmer has no information about the secret being encoded, and it is implemented using the nonlocal process of quantum steering. 

\end{abstract}


\keywords{quantum cryptography, quantum secret sharing, quantum error correction, quantum privacy}

\maketitle

\section{Introduction}

In quantum secret sharing (QSS), parties work together to recover an unknown or secret quantum state that has been encoded into a multipartite system \cite{Cleve-1999a}.
Importantly, only specific subsets of collaborating parties, referred to as authorized sets, can recover the secret.
In contrast, unauthorized sets are groups of parties that collectively have no information about the secret.

This paper initiates the study of QSS schemes in which any subset of parties becomes authorized if it includes a specific designated party, called the helper.
This reflects a scenario in which one individual, such as an elected official, safeguards certain data, and others can obtain it with that person’s assistance.
Alternatively, it could be viewed as a situation in which each of the parties is just one step away from recovering the secret, with the helper serving as the key enabler of access.
This problem may also arise naturally in the development of heterogeneous quantum networks or distributed computing platforms.  
One particular species of qubit, such as trapped ions \cite{Inlek-2017a} or neutral atoms \cite{Covey-2023a}, might be a central node in a network, whereas the other parties could be photonic qubits distributed across lossy channels with easily detectable erasure errors \cite{CY95}. 
In the QSS helper paradigm, the matter-based qubits could serve as the helper, and encoded information could be recovered provided at least one photon arrives at the central node.

Building on the concept of QSS with a helper, we also introduce the notion of a \textit{programmable access structure} (PAS).
The access structure of a QSS specifies which subsets of parties are authorized to recover the secret.
A programmable access structure describes a setting in which the helper is replaced by two different roles: a dealer and a programmer.  The dealer encodes a quantum secret into multiple shares and distributes them, while the programmer decides a particular access structure for the other parties through a judicious choice of local measurement (see Fig. \ref{Fig:PAS_fig}).
The crucial aspect of a PAS is that, unlike in standard QSS, the access structure is decided independent of the encoding, and in fact, it can be decided before the dealer receives the secret or even \textit{after} the shares are delivered to the parties.
This capability might be useful for secure distributed quantum computing, where the level of trust in certain parties could change dynamically.
It could also be leveraged in asynchronous models of computation \cite{Buscemi-2020a, Ji-2024a, Kim-2026a}, where different pieces of classical and quantum information arrive at different times yet still require joint processing.
We show how any QSS helper code can be used to implement a PAS through a straightforward code concatenation.

Our model of a programmable access structure uses just pre-shared entanglement between the programmer and the dealer along with forward classical communication from the programmer to the parties.
This limited type of communication ensures that every PAS is secure against an untrusted programmer.
Since there is no signaling from the dealer to the programmer, no information about the encoded secret is ever leaked to the programmer.
We refer to this as \textit{blind} programming since it is loosely reminiscent of blind quantum computing \cite{Broadbent-2009a}, a computational paradigm in which information processing is driven remotely by some third party who acts on fully encrypted data.
While there is no communication between the programmer and the dealer in a PAS, the quantum correlations they share can be described using the concept of quantum steering \cite{Wiseman-2007a}.  
Interestingly, we show that steering is the key resource underlying every PAS, as quantum steerability between the programmer and dealer is necessary to realize any programmable access structure.

\begin{figure}[t]
\centering
\includegraphics[width=.4\textwidth]{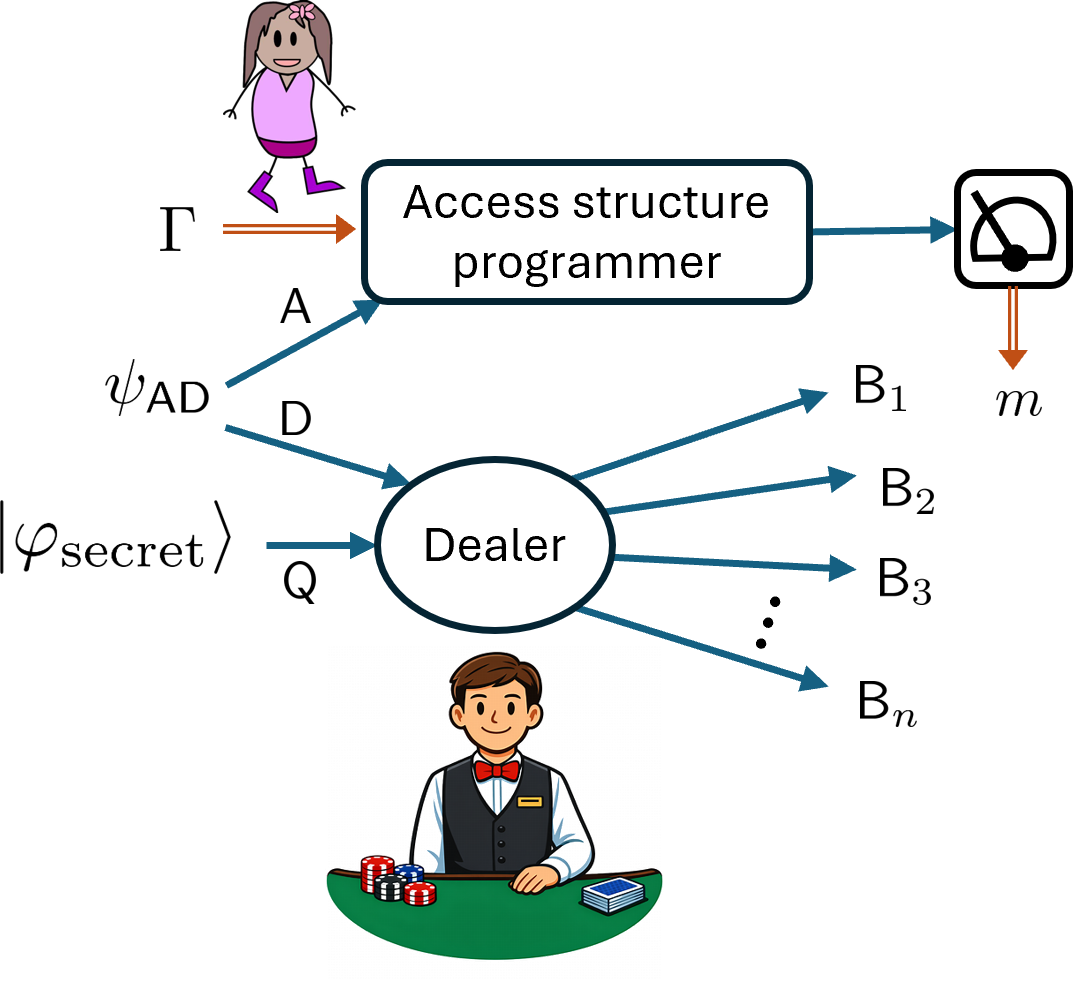}
\caption{A programmable access structure (PAS) involves a dealer who distributes encrypted shares of a quantum secret.  The access structure $\Gamma$ for the $\mathsf{B}_i$ is decided remotely by a programmer at any time before or after the shares are delivered by steering the dealer's quantum system $\D$ using pre-shared entanglement $\varphi_{\A\D}$.  
}
\label{Fig:PAS_fig}
\end{figure}

Historically, quantum secret sharing drew much of its inspiration from the theory of classical secret sharing \cite{Cleve-1999a, Gottesman-2000a, Smith-2000a}.
In particular, Shamir's original work on secret sharing introduced the notion of threshold codes \cite{S79}, which could naturally be extended into the construction of quantum threshold schemes.
In a threshold code, every party is on equal footing, and recovery requires only that a specific membership threshold is attained, irrespective of which parties collaborate.
QSS helper codes can be seen as the opposite extreme in which one party plays a very special role.

Quantum secret sharing beyond threshold schemes has been studied previously in different contexts.
It was realized early on by Gottesman \cite{Gottesman-2000a} and Smith \cite{Smith-2000a} that QSS schemes can be developed with any access structure provided (i) the no-cloning principle is satisfied (meaning it is impossible for two disjoint subsets of parties to recover the secret), and (ii) the access structure is monotonic (meaning that adding parties to a subset can never harm their ability to recover the secret).
Explicit constructions of non-threshold codes have been built using the stabilizer formalism \cite{Markham-2008a, Sarvepalli-2012a, Matsumoto-2019a}, permutation-invariant codes~\cite{sikand2026quantumanonymoussecretsharing}, and other properties of multipartite entanglement \cite{Wang-2013a}.  
An example of a tripartite helper code was presented in Ref. \cite{Jetal20}, and in Section \ref{Sect:Helper-examples} below, we generalize this example to a much larger family of codes.
Ramp secret sharing is a variety of non-threshold secret sharing in which certain subsets of parties are allowed partial  information about the secret \cite{Ogawa-2005a}.
This is in contrast to perfect codes that cleanly separate each subset of parties as either being authorized or unauthorized \cite{Gottesman-2000a}.
Despite the extensive previous work on QSS, relatively little attention has been directed to understanding codes with highly asymmetric access structures; this work aims to help fill this gap.

Our first main contribution is a general framework for studying QSS helper codes based on the structure of quantum error correcting codes for the erasure channel; this is carried out over Sections \ref{sec:preliminaries} and \ref{Sect:Helper-codes}.
In Section \ref{Sect:Helper-examples} we then provide examples of helper codes, which to our knowledge are all novel QSS schemes.
Our constructions offer QSS helper codes for any size secret and any number of parties.
Section \ref{Sect:PAS} introduces the task of access structure programming, and we prove that quantum steering is the key resource that enables blind programming.
As quantum steering is a non-classical effect, our results show that secure programmable access structures are only achievable using quantum resources.
Finally, in Section \ref{Sect:Conclusion} we provide concluding remarks and identify some open problems.

\section{Preliminaries}

\label{sec:preliminaries}

\subsection{Quantum secret sharing}

 A QSS code for state space $\mc{H}_{\mathsf{Q}}$ and parties $\bm{\B}=\mathsf{B}_1,\cdots,\mathsf{B}_n$ is a subspace $\mc{V}\subset\mc{H}_{\bm{\B}}$ of dimension $|\mc{V}|:=\dim(\mc{V})$.
The code is characterized by its access structure $\Gamma\subset 2^{\{\bm{\B}\}}$ whose elements $S\in \Gamma$ are called authorized sets.
Formally, we can describe this condition as follows:
$\forall S\in\Gamma$, there exists a quantum channel $\mc{R}_{S}$ such that \begin{equation}
    \label{Eq:QEC-defn}
\mc{R}_{S}\circ\tr_{\overline{S}}\circ \Pi_{\mc{V}} =\Pi_\mc{V},
        \end{equation}
        where $\tr_{\overline{S}}$ is the partial trace map over systems $\overline{S}:=\{\bm{\B}\}\setminus S$ and $\Pi_{\mc{V}}(\cdot)=P_{\mc{V}}(\cdot)P_\mc{V}$ with $P_{\mc{V}}$ being the projector onto $\mc{V}$.
In this work, we will deal exclusively with pure QSS schemes, which we take to mean a QSS code combined with an isometric encoding map $W:\mc{H}_{\mathsf{Q}}\to \mc{V}$ from some state space $\mc{H}_{\Q}$ into the code subspace $\mc{V}$; i.e. the purity of the encoded state is always preserved.
By channel complementarity (discussed further below), for pure schemes $S \in \Gamma$ implies that $\overline{S} \notin \Gamma$ (but not necessarily the converse). 


The recoverability condition in Eq. \eqref{Eq:QEC-defn} can also be understood in an information-theoretic way following the scheme shown in Fig. \ref{Fig:QSS-circuit}.  
One considers a maximally entangled state $\ket{\Phi^+_{|\Q|}}_{\widetilde{\mathsf{Q}}\mathsf{Q}}=|\mathsf{Q}|^{-1/2}\sum_{i=1}^{|\mathsf{Q}|}\ket{ii}$ shared between $\mathsf{Q}$ and reference system $\wt{\mathsf{Q}}$.  
After the encoder $W$ is applied to $\mc{H}_{\mathsf{Q}}$, the complement of any authorized set $S$ can be discarded.  The recovery map $\mc{R}_{S}$ allows for full restoration of maximal entanglement with $\widetilde{\mathsf{Q}}$.  
It has been known since the early days of quantum error correction \cite{SN96}, that for a fixed authorized set $S$, the process in Fig. \ref{Fig:QSS-circuit} is achievable if and only if $I(\widetilde{\mathsf{Q}}:\overline{S})=0$, where $I$ denotes the quantum mutual information.  
Hence, $W$ is a valid embedding for a QSS scheme with access structure $\Gamma$ if and only if for every  $S\in\Gamma$ and partitioning of the parties $\mc{H}_{\bm{\B}}=\mc{H}_{S}\otimes\mc{H}_{\overline{S}}$, we have $I(\widetilde{\mathsf{Q}}:\overline{S})=0$.

\begin{figure}[t]
\centering
\includegraphics[width=0.4\textwidth]{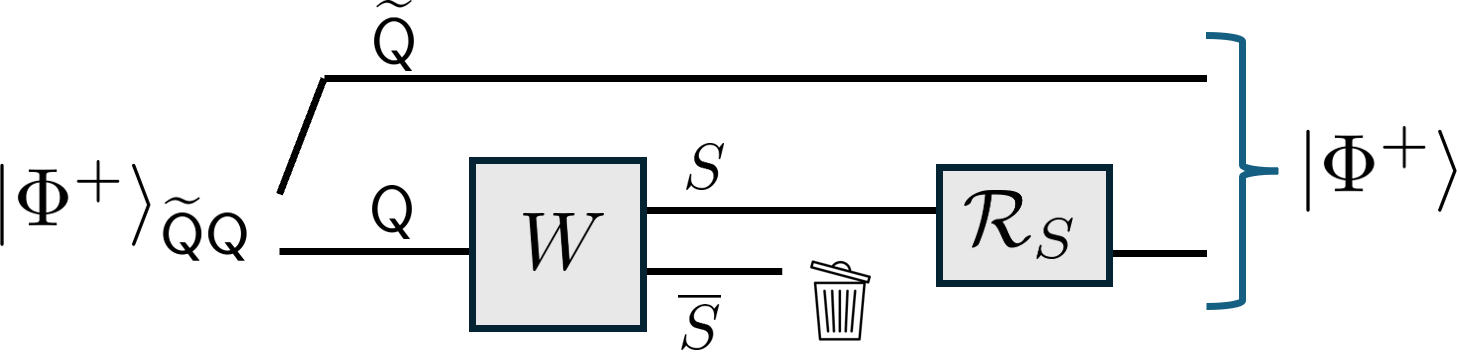}
\caption{For a pure QSS scheme, an encoding isometry $W$ allows for the recovery of maximal entanglement with a reference system using the information contained on any authorized subset $S$ of parties.}
\label{Fig:QSS-circuit}
\end{figure}

This information-theoretic characterization provides a nice connection between recoverability and privacy \cite{kretschmann2008complementarity,crann2016private,kribs2019complementarity}.  
Namely, since $\mathsf{Q}$ and $\wt{\mathsf{Q}}$ begin maximally correlated in the state $\ket{\Phi^+_{|\Q|}}$, we can interpret the condition $I(\widetilde{\mathsf{Q}}:\overline{S})=0$ as roughly saying that full recoverability of the secret by $S$ implies no information about the secret for $\overline{S}$.  
Thus, pure QSS schemes naturally have the feature that the secret is fully private for all $\overline{S}$ whenever $S\in\Gamma$, and conversely privacy for one subset of parties implies recoverability for its complement.
Typically, one considers pure secret sharing schemes that are {\it perfect}, meaning that every $S\subset\{\bm{\B}\}$ satisfies either $S\in\Gamma$ or $\overline{S}\in\Gamma$.
For example, in a $(k,n)$ threshold scheme, any subset of $k$ out of the $n$ parties can recover the secret, while any subset with fewer than $k$ parties holds no information about the secret.
An alternative to perfect secret sharing is ramp secret sharing \cite{Ogawa-2005a}, which allows for some subsets $S\not\in\Gamma$ for which $I(\wt{Q}:S)\not=0$.

\subsection{The structure of erasure codes}

Regardless of whether one considers perfect or ramped secret sharing, a necessary condition for every scheme is the ability to correct the erasure of subsystems $\overline{S}$ for every $S\in\Gamma$.  
Hence, the theory of quantum erasure channels is fundamentally connected to the theory of quantum secret sharing.  
We can understand the quantum erasure channel $\mc{E}_{S}$ for parties $S$ as simply the tracing out of parties $\overline{S}$, i.e. $\mc{E}_{S}=\tr_{\overline{S}}$.  
As shown in Fig. \ref{Fig:QSS-circuit}, the isometry $W:\mc{H}_\mathsf{Q}\to \mc{V}$ serves as an encoding that protects against the noisy channel $\mc{E}_{S}$.  
In fact, for the purpose of QSS, $\mc{V}$ must be a code for the entire collection of erasure channels $\{\mc{E}_{S}\}_{S\in\Gamma}$.

In recent work, we have characterized all codes that can correct a single erasure channel $\mc{E}_{S}=\text{Tr}_{\overline{S}}$.
\begin{lemma}[\cite{Cetal25}]\label{lem:equiv-cond}
    Given the setup of Fig. \ref{Fig:QSS-circuit}, the following are equivalent.
    \begin{enumerate}
    \item[\upshape{(a)}] A code $\mc{V}\subset\mc{H}_{S}\otimes\mc{H}_{\overline{S}}$ can correct the erasure $\mc{E}_{S}$; i.e., there is a map $\mc{R}_{S}$ such that Eq. \eqref{Eq:QEC-defn} holds. 
        \item[\upshape{(b)}] $I(\wt{\Q}:\overline{S})_{\mbb{I}_{\wt{\Q}}\otimes W\ket{\Phi^+}_{\wt{\Q}\Q}}=0$.
       \item[\upshape{(c)}] $\mc{V}$ has an orthonormal basis $\{\ket{i}_{S\overline{S}}\}$ of the form
\begin{align}\label{eq:codeword-form}
\ket{i}_{S\overline{S}}&=\sum_{k=1}^{\rk(\overline{S})}\sqrt{p_k}\ket{\varphi_{i,k}}_{S}\ket{\gamma_k}_{\overline{S}},
\end{align}
where the $\{\ket{\gamma_k}\}_k$ form an orthonormal set and the set of states $\{\ket{\varphi_{i,k}}_{S}\}_{i,k}$ satisfy 
 $\ip{\varphi_{i,k}}{\varphi_{j,l}}=\delta_{kl}\delta_{ij}$.  
   Here, $\rk(\overline{S})$ refers to the rank of systems $\overline{S}$ in the state $\mbb{I}_{\wt{\Q}}\otimes W\ket{\Phi^+}_{\wt{\Q}\Q}$.
        \item[\upshape{(d)}] There exists an auxiliary system $\mathsf{R}$, a fixed state $\ket{\psi}_{\Q\overline{S}}$, and isometry $U_{\Q\R\to S}$ such that
        \begin{align}
    \ket{i}_{S\overline{S}}&=(U_{\Q\R\to S}\otimes\mbb{I}_{\overline{S}})\ket{i}_{\Q}\otimes\ket{\psi}_{\R\overline{S}}\;\;\forall i=1,\ldots , |\mc{V}|.\notag
\end{align}
\end{enumerate}
\end{lemma}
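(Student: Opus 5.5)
We prove the cycle (a)$\Leftrightarrow$(b) together with (b)$\Rightarrow$(c)$\Rightarrow$(d)$\Rightarrow$(a). The step (b)$\Rightarrow$(c) carries the real content; the rest is routine.

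\emph{(a)$\Leftrightarrow$(b).} This is the information-theoretic erasure correction criterion of Schumacher–Nielsen \cite{SN96}, recalled just before the lemma. Write $\ket{\Psi}=(\mbb{I}_{\wt{\Q}}\otimes W)\ket{\Phi^+}_{\wt{\Q}\Q}$, a pure state on $\wt{\Q}S\overline{S}$. Exact recovery of $\ket{\Phi^+}$ from $S$ is equivalent to perfect coherent information, $I(\wt{\Q}\rangle S)=\log|\Q|$. For a pure tripartite state, $I(\wt{\Q}\rangle S)=S(S)-S(\overline{S})$, so this condition is equivalent to $I(\wt{\Q}:\overline{S})=0$. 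Since $\rho_{\wt{\Q}}$ is maximally mixed, the latter holds iff $\rho_{\wt{\Q}\overline{S}}=\mbb{I}/|\Q|\otimes\rho_{\overline{S}}$.

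\emph{(b)$\Rightarrow$(c).} The product form gives $\tr_S\op{i}{j}=\delta_{ij}\rho_{\overline{S}}$ for the encoded basis $\ket{i}=W\ket{i}_\Q$. Diagonalize $\rho_{\overline{S}}=\sum_{k=1}^{\rk(\overline{S})}p_k\op{\gamma_k}{\gamma_k}$, which fixes $\{\ket{\gamma_k}\}$ and $\rk(\overline{S})$. Expanding each codeword in this basis on $\overline{S}$ gives $\ket{i}=\sum_k\sqrt{p_k}\ket{\varphi_{i,k}}_S\ket{\gamma_k}_{\overline{S}}$, with $\ket{\varphi_{i,k}}:=p_k^{-1/2}(\mbb{I}\otimes\bra{\gamma_k})\ket{i}$. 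Any component of $\ket{i}$ on $\overline{S}$ outside the support of $\rho_{\overline{S}}$ vanishes, since $\rho_{\overline{S}}$ is also the reduced state of each $\ket{i}$.

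The main check is the orthonormality condition $\ip{\varphi_{i,k}}{\varphi_{j,l}}=\delta_{ij}\delta_{kl}$. It follows by computing
\[
\bra{\gamma_l}\tr_S\op{j}{i}\ket{\gamma_k}=\sqrt{p_kp_l}\,\ip{\varphi_{i,k}}{\varphi_{j,l}}
\]
and comparing with $\delta_{ij}p_k\delta_{kl}$. The subtle point is that (b) controls the full operators $\tr_S\op{i}{j}$ and not only the diagonal ones. This holds because $\rho_{\wt{\Q}\overline{S}}=\frac{1}{|\Q|}\sum_{ij}\op{i}{j}_{\wt{\Q}}\otimes\tr_S\op{i}{j}$, and the tensor-product form forces every off-diagonal block to vanish and every diagonal block to equal $\rho_{\overline{S}}$.

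\emph{(c)$\Rightarrow$(d).} Let $\R$ have dimension $\rk(\overline{S})$ and set $\ket{\psi}_{\R\overline{S}}=\sum_k\sqrt{p_k}\ket{k}_\R\ket{\gamma_k}_{\overline{S}}$. Define $U\ket{i}_\Q\ket{k}_\R=\ket{\varphi_{i,k}}_S$. By (c) this map sends an orthonormal basis of $\Q\R$ to an orthonormal set in $S$, so it extends linearly to an isometry $U_{\Q\R\to S}$. Applying $U\otimes\mbb{I}_{\overline{S}}$ to $\ket{i}_\Q\otimes\ket{\psi}_{\R\overline{S}}$ reproduces the codeword $\ket{i}_{S\overline{S}}$ of (c).

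\emph{(d)$\Rightarrow$(a).} Take the recovery map $\mc{R}_S(\rho)=\tr_\R\big[U^\dagger\rho U+\tr\big((\mbb{I}-UU^\dagger)\rho\big)\sigma\big]$ for any fixed state $\sigma$; the second term makes the map trace preserving. For $\ket{\phi}=\sum_ic_i\ket{i}$ we have $\tr_{\overline{S}}\op{\phi}{\phi}=U\big(\op{\phi}{\phi}_\Q\otimes\rho_\R\big)U^\dagger$, where $\rho_\R=\tr_{\overline{S}}\op{\psi}{\psi}$. Applying $\mc{R}_S$ returns $\op{\phi}{\phi}_\Q$, and composing with $W$ gives Eq. \eqref{Eq:QEC-defn}. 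The same calculation, and hence Eq. \eqref{Eq:QEC-defn}, extends to mixed inputs by linearity.
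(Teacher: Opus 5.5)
Your proof is correct. The paper does not actually prove this lemma. It imports (a)$\Leftrightarrow$(b)$\Leftrightarrow$(d) from \cite{Cetal25}, with (a)$\Leftrightarrow$(b) credited to \cite{SN96}, and only remarks that (c) is easily seen to be equivalent to (d); that is, it reaches the codeword form (c) from the isometric form (d).

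You instead derive (c) directly from the decoupling condition (b). You read off $\tr_S\op{i}{j}=\delta_{ij}\rho_{\overline{S}}$ from the product form of $\rho_{\wt{\Q}\overline{S}}$; your remark that the off-diagonal blocks must vanish is exactly what supplies the $\delta_{ij}$ in $\ip{\varphi_{i,k}}{\varphi_{j,l}}=\delta_{ij}\delta_{kl}$. You then build $U$ from (c) and close the cycle with an explicit recovery map.

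The citation route buys brevity. Yours buys a self-contained argument, and it has a benefit you did not use. Since (b)$\Rightarrow$(c)$\Rightarrow$(d)$\Rightarrow$(a) already establishes (b)$\Rightarrow$(a) constructively, you only need the easy direction (a)$\Rightarrow$(b) of Schumacher--Nielsen. That direction follows from data processing of coherent information ($I(\wt{\Q}\rangle S)\ge\log|\Q|$ when recovery succeeds) together with your identity linking $I(\wt{\Q}\rangle S)$ to $I(\wt{\Q}:\overline{S})$. So the nontrivial half of \cite{SN96} can be dropped entirely.

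Two cosmetic points:
\begin{enumerate}
\item Eq.~\eqref{Eq:QEC-defn} is an identity on all operators. The final extension should therefore be to every $\op{\phi}{\phi'}$ with $\ket{\phi},\ket{\phi'}\in\mc{V}$, not just to mixed inputs; your computation goes through verbatim.
\item ``Composing with $W$'' should mean composing with the isometry $\ket{i}_{\Q}\mapsto\ket{i}_{S\overline{S}}$ onto the basis of (d). This is harmless, since (a) depends only on $\mc{V}$.
\end{enumerate}
You also correctly read the state in (d) as $\ket{\psi}_{\R\overline{S}}$, which fixes a typo in the statement.
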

\noindent Condiiton (c) was not explicitly included in \cite{Cetal25}, but it is easily seen to be equivalent to the code state condition (d). Note that for a QSS scheme, these conditions must apply for any bipartitioning $\bm{\B}=S\overline{S}$ with $S\in \Gamma$.  


\section{Helper quantum secret sharing}

\subsection{Helper codes}

\label{Sect:Helper-codes}

Now suppose we consider an additional system $\mathsf{A}$ so that there are $n+1$ total systems labeled $\A, \B_1, \cdots, \B_n$.
We envision $\A$ as a helper that can assist any of the $\B_i$ in decoding the encoded quantum information.
This leads to the notion of a QSS helper scheme.
\begin{definition}
    A quantum secret sharing code for parties $\A,\B_1,\cdots,\B_n$ is called a QSS helper code with helper $\mathsf{A}$ if its access structure $\Gamma$ satisfies 
\begin{enumerate}
\item[(i)] $\mathsf{A}\in S \implies S\in\Gamma$;
\item[(ii)] $\{\mathsf{A}\}\not\in \Gamma$.
\end{enumerate}
We call the QSS code a \textit{blind} helper code if, in addition,
\begin{enumerate}
\item[(iii)] $\{\bm{\B}\}\in \Gamma$, 
\end{enumerate}
where  $\bm{\B}=\msf{B}_1,\cdots,\msf{B}_n$.

\end{definition}

\medskip

\noindent Condition (i) says that the mere inclusion of the helper $\mathsf{A}$ in any cohort of parties is sufficient to recover the secret.  
By monotonicity of access structures, we can equivalently reduce this to the condition that $(\mathsf{A},\mathsf{B}_1),(\mathsf{A},\mathsf{B}_2),\cdots,(\mathsf{A},\mathsf{B}_n)\in\Gamma$.  
Condition (ii) is a non-triviality condition, meaning that Alice (party $\mathsf{A}$) cannot recover the secret on her own.  
This allows for some level of safeguarding since at least one other party $\mathsf{B}_i$ must collaborate with her. 

Even more fundamentally, if Alice is able to recover the secret acting by herself, then the problem is equivalent to correcting for the erasure of $\bm{\B}$, and Lemma \ref{lem:equiv-cond} already provides a full characterization of such codes.  That is, $\{\ket{i}_{\mathsf{A}\bm{\B}}\}$ provides a basis for code $\mc{V}$ if and only if we can write $\ket{i}_{\mathsf{A}\bm{\B}}=U\ket{i}_{\mathsf{A}_1}\ket{\psi}_{\mathsf{A}_2\bm{\B}}$ for some local isometry $U:\mathsf{A}_1\mathsf{A}_2\to\mathsf{A}$ on Alice's side and some fixed shared state $\ket{\psi}_{\mathsf{A}_2\bm{\B}}$.
Condition (ii) rules out such trivial codes.

When condition (iii) holds, which is a stronger condition than (ii) (i.e., (iii) implies (ii)), we describe the code as being blind due to complementarity.
Again following Lemma \ref{lem:equiv-cond}, the recovery of system $\mathsf{Q}$ by $\bm{\B}$ implies that $I(\wt{Q}:\mathsf{A})=0$. 
In other words, Alice is completely uncorrelated with the reference system $\wt{Q}$.
Operationally, this can be understood as a statement of privacy since her system will then also be uncoupled to the secret recovered by $\bm{\B}$.

Every pure QSS helper scheme is \textit{almost} a perfect scheme.  
This is because every subset $S\not=\{\bm{\B}\}$ of the parties is either authorized or it does not contain $\mathsf{A}$; in the latter case, $\overline{S}$ will contain $\mathsf{A}$ (and at least one other $\mathsf{B}_i$) and is therefore authorized.
For non-blind QSS schemes the full set of Bobs $\bm{\B}$ is not authorized even though $\overline{\{\bm{\B}\}}=\{\mathsf{A}\}\not\in\Gamma$; hence these are not perfect and $I(\wt{Q}:\A)\not=0$.
In contrast, blind QSS helper codes explicitly require that $\{\bm{\B}\}\in\Gamma$, and so these codes are indeed perfect.

Blind QSS codes are a very natural class of codes to consider since they permit recovery of the secret under full collaboration of all the parties $\B_1,\cdots,\B_n$.
This provides a balance of power so that assistance of the helper is sufficient for recovery but not necessary.
It also restricts the helper to be an impartial participant, one who can just enable recovery without acquiring any information independently.
Also, from an implementation perspective, blind QSS codes have the advantage that the encoding can be accomplished by acting non-trivially only on the $\bm{\B}$ systems.
That is, condition (d) of Lemma \ref{lem:equiv-cond} implies that blind QSS codes have a logical basis of the form
\begin{align}
    \label{Eq:blind-encoding}
\ket{i}_{\A\bm{\B}}=\mathbb{I}_{\A}\otimes U_{\R\Q\to\bm{\B}}\ket{\psi}_{\A\R}\ket{i}_{\Q}.
\end{align}
Also, as we show in Example \ref{Ex:universal-blind} below, every non-blind code can be converted into a blind one 
through the use of auxiliary Bell states.

\subsection{Examples of helper codes}

\label{Sect:Helper-examples}

We now provide some examples of QSS helper codes.

\begin{example}[Helper secret sharing from threshold secret sharing]\label{ex:threshold-helper-codes}
    Our first class of examples involves reassigning shares of traditional threshold secret sharing schemes.
In the seminal work of \cite{CGL99}, the authors touch on a method to build more general access secret sharing schemes beyond thresholds schemes.
    Specifically, any $(n,k)$ threshold secret sharing scheme can be transformed into a helper secret sharing scheme by giving $k-1$ shares to the helper.
    The helper's cooperation with any other party now becomes sufficient in recovering the secret.
    In some cases, the collection of non-helper participants also forms an authorized set and the helper is blind.
    For odd $n$, when $k\neq\frac{n+1}{2}$, the helper is not blind and the collection of non-helper participants does not form an authorized set.
For any secret dimension $|\Q|$, it is always possible to choose a qudit dimension $d$, a number of qudits $n$ and a threshold $k$, such that an integer number of qudits can be reassigned to the helper \cite{CGL99}. 
However, in this case, the resulting dimension of the helper system generally grows faster than the code dimension. 
    %
\end{example}

When trying to construct helper codes, an interesting question is whether there is any relationship between the size of the encoded state space $\mc{H}_{\Q}$ and the system dimension of the shareholders.
One relatively simple bound on $|\Q|$ can be stated in terms of the system dimensions.
\begin{proposition}
\label{Prop:Helper-dim-bound}
    Every pure helper scheme of $\Q$ for systems $\A, \B_1, \cdots  \B_N$ with $\A$ being the helper satisfies
    \begin{equation}
    \label{Eq:helper-dim-ineq}
       |\Q|\leq \min\{\frac{1}{2}|\mathsf{A}|\cdot \min_{i}|\mathsf{B}_i|,\;|\mathsf{A}|\}.
    \end{equation}
\end{proposition}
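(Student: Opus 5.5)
We prove the two bounds $|\Q|\le|\A|$ and $|\Q|\le\frac12|\A|\,|\B_i|$ separately, using the entropic form of Lemma~\ref{lem:equiv-cond}(b). Fix the purified setting of Fig.~\ref{Fig:QSS-circuit}: the pure state $\ket{\Omega}=(\mbb{I}_{\wt{\Q}}\otimes W)\ket{\Phi^+}_{\wt{\Q}\Q}$ on $\wt{\Q}\A\B_1\cdots\B_N$. Here $S(\wt{\Q})=\log|\Q|$ and $\rho_{\wt{\Q}}$ is maximally mixed. For any authorized $S$ we have $I(\wt{\Q}:\overline{S})=0$. Purity then gives $I(\wt{\Q}:S)=2\log|\Q|$.

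\emph{Step 1: $|\Q|\le|\A|$.} Take any $i$ and the pair of authorized sets $\{\A,\B_i\}$ and $\{\A\}\cup(\bm{\B}\setminus\B_i)$, whose union is everything. We use the standard no-cloning/erasure-code inequality for a region $\A$ shared by two authorized sets. Write $X=\B_i$ and $Y=\bm{\B}\setminus\B_i$. Authorization of $\A X$ gives $I(\wt{\Q}:Y)=0$, and authorization of $\A Y$ gives $I(\wt{\Q}:X)=0$. Hence $S(\wt{\Q}X)=\log|\Q|+S(X)$ and similarly for $Y$. Subadditivity on the pure state gives $S(\wt{\Q}X)=S(\A Y)\le S(\A)+S(Y)$ and $S(\wt{\Q}Y)=S(\A X)\le S(\A)+S(X)$. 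Adding the two inequalities yields $2\log|\Q|\le 2S(\A)\le 2\log|\A|$. (Equivalently, the secret must be recoverable from $\A$ together with either of two disjoint complementary pieces, and neither piece alone carries information.) This step needs $N\ge2$. For $N=1$, condition (ii) together with the helper condition forces $\{\A,\B_1\}$ to be everything, so the bound must be argued directly.

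\emph{Step 2: $|\Q|\le\frac12|\A||\B_i|$.} Fix $i$ and set $C=\bm{\B}\setminus\B_i$. Since $\A\B_i$ is authorized, Lemma~\ref{lem:equiv-cond}(c),(d) gives $\A\B_i\supseteq\Q\otimes\R$ via an isometry, so $|\Q|\cdot\rk(C)\le|\A||\B_i|$. The factor $\frac12$ would follow from $\rk(C)\ge2$, i.e.\ $C$ is not in a pure state independent of $\wt{\Q}$ and $\A\B_i$. Suppose instead that $\rk(C)=1$. Then $\ket{\Omega}=\ket{\omega}_{\wt{\Q}\A\B_i}\otimes\ket{c}_C$. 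But $\A C$ is authorized (for $N\ge2$), which requires $I(\wt{\Q}:\B_i)=0$ and $I(\wt{\Q}:\A C)=2\log|\Q|$. With $C$ in a product pure state, this means $\A$ alone recovers the secret, contradicting (ii). Hence $\rk(C)\ge2$.

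The main obstacle is Step 2. First, we must ensure that $\rk(C)$ is an integer at least $2$ and enters the dimension count exactly as in Lemma~\ref{lem:equiv-cond}(c): the orthonormal set $\{\ket{\varphi_{i,k}}\}$ has $|\Q|\rk(C)$ elements in $\mc{H}_{\A\B_i}$. Second, the degenerate case $N=1$ must be handled separately. There $\A\B_1$ being authorized and (ii) already give $|\Q|\le|\A||\B_1|$, and the improvement to $\frac12$ is not forced by Step 2's argument. So we should check whether the statement implicitly assumes $N\ge2$; if it does not, we would try a separate argument for $N=1$, or note that the minimum is then attained by the $|\A|$ term whenever $|\B_1|\ge2$.
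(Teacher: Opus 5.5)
Your argument is correct for $N\ge 2$ and follows the paper's proof. Step 2 is exactly its rank-counting argument, with $\rk(\B_2\cdots\B_N)\ge 2$ forced by condition (ii), and Step 1 derives explicitly via subadditivity the significant-share bound $\log|\Q|\le S(\A)$ that the paper cites rather than proves. One correction: the case $N=1$ cannot be rescued by a separate argument, because the statement is false there — taking $W$ to be the identification $\mathbb{C}^4\cong\mathbb{C}^2\otimes\mathbb{C}^2$ with $|\A|=|\B_1|=2$ gives a valid helper scheme ($\A$ alone cannot recover) with $|\Q|=4>|\A|$, so $N\ge 2$ is a genuine tacit hypothesis, which the paper's proof also uses when it invokes authorization of $\A\B_2\cdots\B_N$.
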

\begin{proof}
The bound $|\mathsf{A}|\geq|\mathsf{Q}|$ follows from entropic arguments in Refs. \cite{Hetal05, Gottesman-2000a}.
In the terminology of \cite{Leung-2010a}, any helper system $\msf{A}$ is ``significant'' in the sense that, say, $\B_2\cdots\B_N$ have no information about the secret (since $\A\B_1$ is authorized), and yet adding $\A$ to the cohort makes $\A\B_2\cdots\B_N$ authorized.
For any significant party, it has been shown that
\begin{align}
    \log|\Q|=S(\wt{Q})\leq S(\A)\leq\log |\A|,
\end{align}
where $S(\cdot)$ denotes von Neumann entropy.

To prove the inequality $\frac{1}{2}|\A|\cdot\min_i|\B_i|\geq |\Q|$,
relabel systems such that $\mathsf{B}_1$ satisfies $|\mathsf{B}_1|=\min_{i}|\mathsf{B}_i|$.  By Lemma \ref{lem:equiv-cond}, $(\mathsf{A},\mathsf{B}_1)$ being an authorized set means that code space $\mc{V}\cong\mc{H}_{\Q}$ has an orthonormal basis $\{\ket{i}\}$ such that
    \begin{align}
        \ket{i}_{\A \B_1\cdots  \B_N}=\sum_{k=1}^{\rk(\rho_{\B_2\cdots  \B_N})}\sqrt{p_{k}}\ket{\varphi_{i,k}}_{\A\B_1}\ket{\gamma_k}_{\B_2 \cdots  \B_N}.\notag
    \end{align}
    Orthogonality of the $\{\ket{\varphi_{i,k}}\}$ immediately implies that $|\Q|\cdot \rk(\rho_{\B_2, \cdots  \B_N})\leq |\A|\cdot|\B_1|$.
    The proposition is then proven by noting that $\rk(\rho_{\B_2 \cdots  \B_N})\geq 2$.
    Indeed, otherwise systems $\B_2 \cdots  \B_N$ would be decoupled from $\wt{\mathsf{Q}}\A\B_1$ in Fig. \ref{Fig:QSS-circuit}, which means that $I(\wt{\mathsf{Q}}:\B_1\B_2\cdots\B_N)=I(\wt{\mathsf{Q}}:\B_1)$=0, with the last equality following from the fact that full recovery is possible when $\B_1$ is discarded.
    However, this is not possible since $I(\wt{\mathsf{Q}}:\B_1\B_2\cdots\B_N)=0$ would mean that Alice could recover the secret entirely on her own. 
\end{proof}

The following example provides an instance of when the inequality in Eq. \eqref{Eq:helper-dim-ineq} is tight.  
\begin{example}[Tripartite helper code encoding arbitrary dimensions]
\label{Ex:FormI}
    Let $\mc{V}\cong\mc{H}_{\Q}$ be the code with a $|\Q|$-dimensional helper $\A$, $d$-dimensional parties $\B_1,\B_2$, and orthonormal basis  
    \begin{align}
    \label{Eq:FormId-eq1}
           \ket{i}&=\ket{i}_\A\frac{1}{\sqrt{d}}\sum_{k=1}^{d}(U_i\otimes\mathbb{I})\ket{kk}_{\B_1\B_2},\quad 1\leq i\leq |\Q|,
    \end{align}
    where the $U_i$ are arbitrary unitaries.
    Note that the basis states can be equivalently written as
    \begin{align}
    \label{Eq:FormId-eq2}
         \ket{i}&=\ket{i}_\A\frac{1}{\sqrt{d}}\sum_{k=1}^{d}(\mathbb{I}\otimes U_i^{\mathtt{T}})\ket{kk}_{\B_1\B_2},
    \end{align}
    where $U_i^{\mathtt{T}}$ denotes the matrix transpose.
    The recoverability is now transparent.
    Parties $\A\B_1$ can recover by performing a controlled $U^\dagger_i$ gate, i.e. $\sum_{i=1}^{d}\op{i}{i}\otimes U^\dagger_i$, while for $\A\B_2$ the recovery is facilitated by a controlled $\overline{U}_i\equiv(U^\dagger_i)^{\mathtt{T}}$ gate.
Note that this code saturates the bound of Eq. \eqref{Eq:helper-dim-ineq} when $d=2$, for an arbitrary code size $|\Q|$.
\end{example}

The next example extends the previous one to $n+1$ parties.  
However, when introducing more parties, an equivalence like Eqns. \eqref{Eq:FormId-eq1} and \eqref{Eq:FormId-eq2} will only hold for diagonal unitaries $U_i$.
\begin{example}[($n+1$)-partite helper code encoding arbitrary dimensions]
\label{Ex:FormII}
    Let $\mc{V}\cong\mc{H}_{\Q}$ be the code with a $|\Q|$-dimensional helper $\A$, $d$-dimensional parties $\bm{\B}=\B_1,\cdots \B_n$, and orthonormal basis  
    \begin{align}
        \ket{i}&=\ket{i}_\A \sum_{k=1}^{d}e^{\mathrm{i}
        \phi_{i,k}}\sqrt{p_k}\ket{k\cdots k}_{\bm{\B}},\quad1\leq i\leq |\Q|,
    \end{align}
   for arbitrary phases $e^{i\phi_{i,k}}$ and probabilistic weights $p_k$.
   Given system $\A$, party $\B_i$ can decouple from all other parties by the unitary $\sum_{i,k=1}^{d}e^{-i\phi_{i,k}}\op{ik}{ik}$. 
\end{example}

\begin{remark}
\label{ex:vip-code}
Examples \ref{Ex:FormI} and \ref{Ex:FormII} generalize the tripartite ``VIP code'' from \cite{Jetal20}, which has encoded basis states of the form
    \begin{align}
    \label{Eq:vip-code} 
        \ket{i}=\sum_{j,k,l=0}^{d-1}\frac{1}{d}t_{ijkl}\ket{jkl}_{\A\B_1\B_2},\quad 0\leq i\leq d-1,
    \end{align}
    where $t_{ijkl}=w^{ij}\ip{k+l}{i}$ and $w=e^{2\pi\mathrm{i}/d}$.
To see the connection, introduce the discrete Fourier transform $H=\frac{1}{\sqrt{d}}\sum_{m,n=0}^{d-1}\omega^{mn}\op{m}{n}$.  Then one computes
\begin{align}
    H\otimes\mbb{I}\otimes\mbb{I}\ket{i}&=\ket{i}_{\A}\frac{1}{\sqrt{d}}\sum_{k,l=0}^{d-1}\ip{k+l}{i}\ket{k}_{\B_1}\ket{l}_{\B_2}\notag\\
    &=\ket{i}_{\A}\frac{1}{d^{3/2}}\sum_{j,k,l=0}^{d-1}\omega^{j(k+l-i)}\ket{k}_{\B_1}\ket{l}_{\B_2}\notag\\
    &=\ket{i}_{\A}\frac{1}{\sqrt{d}}\sum_{j=0}^{d-1}\omega^{-ij}H\ket{j}_{\B_1}\otimes H\ket{j}_{\B_2}.
\end{align}
Thus, up to the local unitaries $H\otimes H\otimes H$, the code of Eq. \eqref{Eq:vip-code} has the form of Examples \ref{Ex:FormI} and \ref{Ex:FormII} above for $|\Q|=|\A|=|\B_i|=d$.
\end{remark}

Neither Example \ref{Ex:FormI} nor \ref{Ex:FormII} is a blind QSS helper code.  
The next example describes a code for which the helper is indeed blind.

\begin{example}[Teleportation codes]

A simple blind helper code can be built through a coherent teleportation protocol.
The scheme is given in Fig. \ref{Fig:QSS_teleportation_code}.
For input system $\Q$, suppose that the helper's system $\A$ is $|\Q|$-dimensional and in the maximally entangled state $\ket{\Phi^+_{|\Q|}}_{\A\R}$ with a reference system $\R$.
Standard teleportation involves performing a $|\Q|$-dimensional Bell measurement on systems $\Q\R$, thereby transferring the information in system $\Q$ to system $\A$, up to some Pauli correction \cite{Bennett-1993a}.
Operationally, the Bell measurement can be carried out by rotating the Bell basis to the computational basis and then measuring systems $\Q\R$ in the computational basis.
Our encoding scheme, however, maintains coherence by skipping the measurement.
Instead, a coherent broadcasting of the computational basis is performed, \[U=\sum_{k=1}^{|\Q|^2}\ket{k}_{\bm{\B}''}^{\otimes n}\bra{k},
\]
where $\bm{\B}''=\B_1''\cdots \B_n''$ are $n$ auxiliary systems delivered to the $n$ parties $\bm{\B}$.
We call the unitary circuit carrying out this basis rotation and the broadcasting $U$ as a ``coherent Bell measurement (BM).'' 
Note that $k$ is an index for the $2\log_2|\Q|$ different measurement outcomes that would be obtained if the Bell measurement occurred.  
Additionally, the helper and each of the $\B_i$ share a maximally entangled state $\ket{\Phi^+_{|\Q|}}_{\A\B_i}$.

\begin{figure}[t]
\centering
\includegraphics[width=0.45\textwidth]{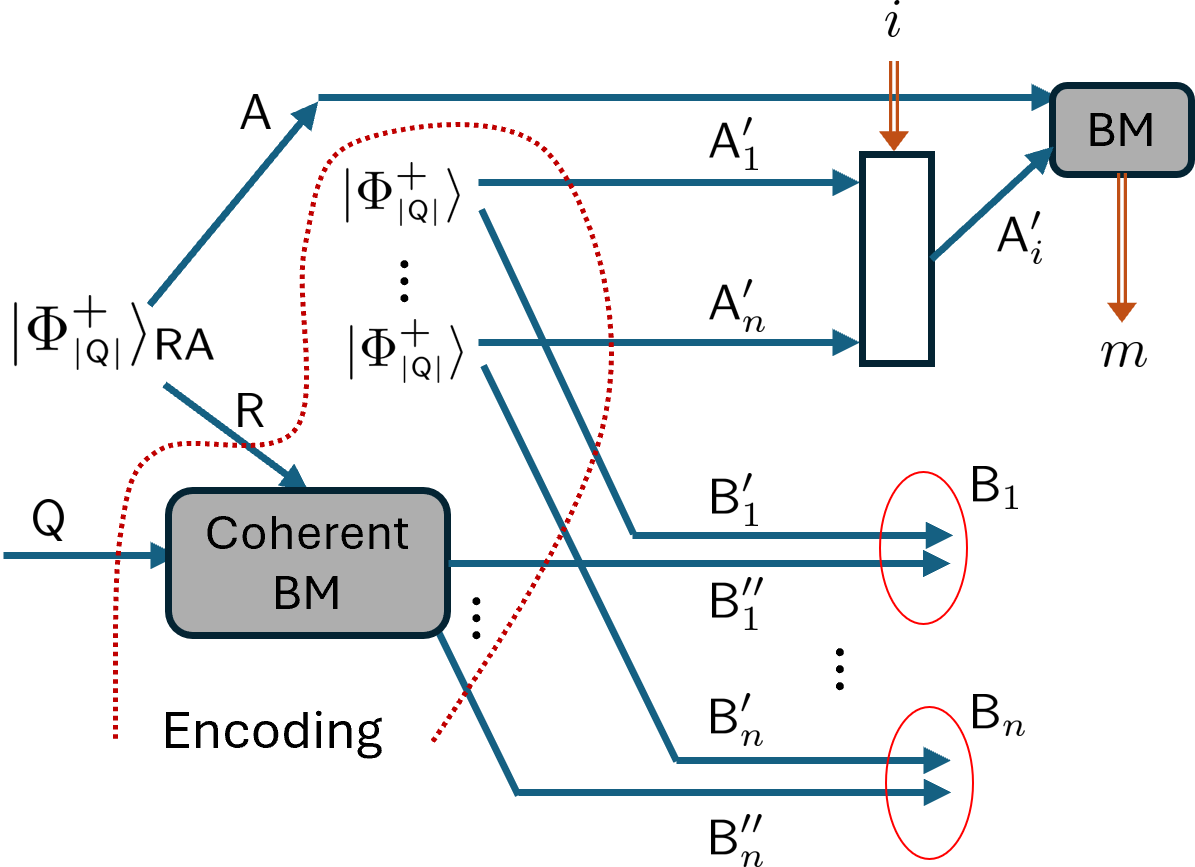}
\caption{Teleportation codes.  A blind QSS code can be constructed by combining a coherent Bell measurement (BM) with a teleportation from the helper to the desired party.  The coherent Bell measurement can be reversed, which allows for the $\bm{\B}$ to collectively recover the secret, while any individual $\B_i$ can recover the secret by effectively triggering a teleportation of the secret to the helper and then having it returned by another round of teleportation.}
\label{Fig:QSS_teleportation_code}
\end{figure}

We now describe how the secret can be recovered.  
Suppose that party $\B_i$ wishes to learn the secret with the help of party $\A$.
Party $\B_i$ simply measures system $\B_i''$ but does not announce the measurement outcome, which effectively teleports the secret to $\A$ up to a Pauli error. 
The helper then teleports the secret back to $\B_i$ by performing a Bell measurement on systems $\A\A_i'$ and communicating the outcome.
The secret can be recovered on system $\B_i'$ by applying the required Pauli corrections from the two teleportation measurements.
Note that $\A$ remains blind during this process since $\B_i$ never reveals the Pauli error after the first teleportation.
On the other hand, the full cohort of parties $\bm{\B}$ can recover the secret without the help of $\A$ by collectively performing $U^\dagger$ on systems $\bm{\B}''$ and then rotating the computational basis back into the Bell basis. 
\end{example}

\begin{example}[Making any helper code blind]
\label{Ex:universal-blind}
The general ideas employed in the previous example can be used to transform any helper code into one that is blind at the cost of needing more Bell pairs.
Let $W:\mc{H}_{\Q}\to\mc{H}_{\A'}\otimes\mc{H}_{\bm{\B}}$ be an encoder for any QSS helper code.
To make this blind, introduce an additionally maximally entangled state $\ket{\Phi^+_{|\A'|}}_{\A\R}$ with system $\A$ belonging to the helper and having dimension $|\A'|=|\A|$.
We then perform a coherent Bell measurement on systems $\A'\R$, as in the teleportation code, which outputs auxiliary systems $\B_1'',\cdots,\B_n''$.
When combining $W$ with the coherent Bell measurement, we have an encoder for a blind helper code for systems $\A,(\B_1\B_1''),\cdots,(\B_n\B_n'')$.

In addition, we can augment the code with extra teleportation so that the helper only needs to communicate classically.
For each party $\B_i$, the helper needs to share an additional state $\ket{\Phi_{|\A|}^+}_{\A'_i\B_i'}$; the helper can then teleport $\A$ to any of the $\B_i$ and enable recovery after communicating classically the necessary Pauli correction. 
\end{example}

Our final example is a stabilizer code that is the most resource-efficient blind helper code among all the ones discussed in this paper.

\begin{example}[CSS helper codes]\label{ex:CSS-code-construction}

The codes we present here have the form of Calderbank–Shor–Steane (CSS) stabilizer codes.
They are blind helper codes that distribute one qubit to an arbitrary number of parties.
Compared to the teleportation codes described above, they are more efficient in terms of the size of the shares.
They also possess the salient feature that the helper enables recovery through forward classical communication alone.

Let $I_{X}^n$ (resp. $I_{Z}^n$) denote the $n\times n$ operator matrix that has $X$ (resp. $Z$) along the diagonal and $I$ everywhere else.  Let $J_{X}^n$ (resp. $J_Z^n$) denote the $n\times n$ operator matrix that has $I$ along the anti-diagonal and $X$ (resp. $Z$) everywhere else.  Finally, let $\mbf{c}^n_X$ (resp. $\mbf{c}^n_Z$) denote the one-column operator matrix with each element being $X$ (resp. $Z$).

For any odd integer $n\geq 3$, we construct a $[[2n+1,1]]$ stabilizer code in which $n$ parties $\B_1,\cdots,\B_n$ each hold a qubit, and the helper $\A$ holds $n+1$ qubits.
We can handle the case of even $n$ by starting with an odd number of parties and then grouping two of them together.
Our codes are CSS codes with stabilizer generators given by the rows of the matrix
\begin{align}
   \begin{pNiceArray}{cc:c}
I_X^n&\mbf{c}^n_X&J_X^n \\
I_{Z}^n&\mbf{c}^n_Z& J_Z^n
    \end{pNiceArray}_{2n\times 2n+1}
\end{align}
Here, the qubits are ordered such that the first $n+1$ columns (to the left of the dash line) are held by helper, while the remaining $n$ qubits (to the right of the dashed line) are held by the other parties.
Overall, we have $2n$ independent and pairwise commuting Pauli strings, which define a one-qubit stabilizer code.
Logical operators are given by
\begin{align}
    \overline{Z}&=I\cdots I:ZZ\cdots Z\notag\\
    \overline{X}&=I\cdots I:XX\cdots X,
\end{align}
which act trivially on the helper's qubits.
Two examples of our codes are 
\begin{align}
    n=3:\;\; \begin{pNiceArray}{cccc:ccc}
X&I&I&X&X&X&I\\
I&X&I&X&X&I&X\\
I&I&X&X&I&X&X\\
Z&I&I&Z&Z&Z&I\\
I&Z&I&Z&Z&I&Z\\
I&I&Z&Z&I&Z&Z
    \end{pNiceArray},\notag
\end{align}
which is precisely the Steane code \cite{St96}, and
\begin{align}
    n=5:\;\; \begin{pNiceArray}{cccccc:ccccc}
X&I&I&I&I&X&X&X&X&X&I\\
I&X&I&I&I&X&X&X&X&I&X\\
I&I&X&I&I&X&X&X&I&X&X\\
I&I&I&X&I&X&X&I&X&X&X\\
I&I&I&I&X&X&I&X&X&X&X\\
Z&I&I&I&I&Z&Z&Z&Z&Z&I\\
I&Z&I&I&I&Z&Z&Z&Z&I&Z\\
I&I&Z&I&I&Z&Z&Z&I&Z&Z\\
I&I&I&Z&I&Z&Z&I&Z&Z&Z\\
I&I&I&I&Z&Z&I&Z&Z&Z&Z\\
    \end{pNiceArray}.\notag
\end{align}

The decoding procedure for our codes is straightforward.  Suppose that Alice wants to localize the secret to party $\B_i$ with $i\in\{1,2,\cdots,n\}$.  She measures $Z$ on both her qubits $n+1$ and $n+1-i$, while she measures $X$ on all of her other qubits.  To analyze the action of these measurements, suppose without loss of generality that $i=1$.  The measurement of $X_1,X_2,\cdots,Z_n$ on her qubits yields the state with stabilizer generators
\begin{align}
    \begin{pNiceArray}{c:c}
\begin{array}{c}
     (-1)^{a_1}X\\ 
      \vdots\\
      (-1)^{a_{n-1}}X
\end{array}&J_X^{n-1}\\
(-1)^{a_n}Z&I\;Z\;\cdots\; Z
    \end{pNiceArray},
\end{align}
where the values of $a_1,\cdots,a_n$ correspond to Alice's outcomes.
Equivalently, by adding row $1$ to rows $2,3,\cdots,n-1$, we can express the generators as
\begin{align}
   \begin{pNiceArray}{c:ccccccc}
   X&X&X&X&\cdots &X&X&(-1)^{a_{1}}I\\
   I&I&I&I&\cdots &I&X&(-1)^{a_{2}}X\\
   I&I&I&I&\cdots &X&I&(-1)^{a_{3}}X\\
       &&\vdots&&&&\vdots&\\
   I&I&X&I&\cdots &I&I &(-1)^{a_{n-1}}X\\
   Z&I&Z&Z&\cdots &Z&Z &(-1)^{a_n}Z
    \end{pNiceArray}.
    \end{align}
Then when Alice measures $Z$ on her remaining qubit, the final state for the other parties has stabilizer generators
\begin{align}
   \begin{pNiceArray}{ccccccc}
   I&I&I&\cdots &I&X&(-1)^{a_{2}}X\\
   I&I&I&\cdots &X&I&(-1)^{a_{3}}X\\
       &\vdots&&&&\vdots&\\
   I&X&I&\cdots &I&I &(-1)^{a_{n-1}}X\\
  I&Z&Z&\cdots &Z&Z &(-1)^{a_n+a_{n+1}}Z
    \end{pNiceArray}.
    \end{align}
These stabilize the one-qubit space $\B_1$, with systems $\B_2\cdots\B_n$ fully decoupled. 
Moreover, since $\overline{Z}=I\cdots I:ZZ\cdots Z$ and $\overline{X}=I\cdots I:XX\cdots X$, we see that the logical operators transform as
\begin{align}
    \overline{Z}&\mapsto (-1)^{a_n+a_{n+1}}Z_{\B_1}\notag\\
    \overline{X}&\mapsto (-1)^{a_2+a_3+\cdots a_{n-1}}X_{\B_1}.
\end{align}
Hence, all the quantum information has been localized onto qubit $\B_1$, as desired.


Since the logical operators $\overline{Z}$ and $\overline{X}$ act trivially on the helper's system, the code is blind \cite{Bravyi-2009a} (see also Ref. \cite{Companion-temp}).
Hence, the encoding can be accomplished using pre-shared entanglement $\ket{\psi}_{\R\A}$ as in Eq. \eqref{Eq:blind-encoding}.
Using the methods of Ref. \cite{Fetal04}, one can show that it suffices for the helper to share $n-1$ ebits with the reference system in the state $\ket{\psi}$.
This should be compared with the teleportation codes, in which $n$ ebits are needed.  
Furthermore, the shares held by parties $\B_i$ are twice as large in the teleportation code for $d=2$ relative to the CSS codes. 

\end{example}

\section{Programmable Access Structures}
\label{Sect:PAS}

\subsection{Defining a PAS}

The idea of a programmable access structure (PAS) is shown in Fig. \ref{Fig:PAS_fig}.
The general scenario consists of a dealer who performs an entanglement-assisted encoding of quantum system $\mathsf{Q}$ via some completely-positive trace-preserving (CPTP) map $\mc{E}_{\mathsf{DQ}\to\bm{\B}}:\mathrm{L}(\mc{H}_{\mathsf{D}}\otimes\mc{H}_{\msf{Q}})\to \mathrm{L}(\mc{H}_{\bm{\B}})$, which acts on the set of linear operators for joint state space $\mc{H}_\D\otimes\mc{H}_\Q$.  Here, $\mathsf{D}$ is an auxiliary system initially entangled with the programmer Alice in the state $\psi_{\mathsf{AD}}\in\mathrm{L}(\mc{H}_{\A}\otimes\mc{H}_{\D})$ (which may or may not be pure), and  $\bm{\B}=\msf{B}_1,\cdots,\msf{B}_n$ is a collection of $n$ parties.
Associated with this embedding is a family of access structures $\mathfrak{A}=\{\Gamma_i\}_i$ with $\Gamma_i\subset 2^{\{\bm{\B}\}}$ and a measurement assemblage $\{M_{a|i}\}_{a,i}$ for the programmer (i.e. a family of positive operator-valued measures (POVMs) for system $\A$ such that the $i^{th}$ POVM has effects $M_{a|i}\geq 0$ satisfying $\sum_{a}M_{a|i}=\mathbb{I}_\A$).
In the following definition, we let $\text{supp}\{X\}$ denote the support of positive operator $X$ and $\mathbb{I}_{\mathsf{Q}}$ the identity operator on $\mc{H}_{\mathsf{Q}}$.
\begin{definition}
\label{Defn:PAS}
A programmable access structure (PAS) for input system $\Q$ and parties $\bm{\B}$ is defined by a tuple $(\mathfrak{A},\;\psi_{\mathsf{AD}},\;\mc{E}_{\mathsf{DQ}\to\bm{\B}},\;\{M_{a|i}\})$ 
with $\mathfrak{A}=\{\Gamma_i\}\subset 2^{\{\bm{\B}\}}$ being a family of access structures such that the following two conditions hold: \\ \noindent (i)
    \begin{align}
\label{Eq:PAS}
   \text{supp}\!\left\{\!\tr_{\mathsf{A}}[ (M_{a|i}\!\otimes\!\mathbb{I}_{\bm{\B}} ) \text{id}_{\A}\otimes\mc{E}_{\D\Q\to\bm{\B}}(\psi_{\A\D}\otimes \mathbb{I}_{\mathsf{Q}}  \!)]\!\right\}
\end{align}
is a QSS code for $\Q$ (denoted by $\mc{V}_{a|i}$) with access structure $\Gamma_i$ for every outcome $a$ and every choice of access structure $\Gamma_i\in\mathfrak{A}$; and \\ \noindent (ii)
 \begin{align}
\label{Eq:PAS-joint}
   \mc{V}=\text{supp}\left\{ \text{id}_{\A}\otimes\mc{E}_{\D\Q\to\bm{\B}}(\psi_{\A\D}\otimes \mathbb{I}_{\mathsf{Q}}  \!)\right\}
\end{align}
is a code for $\Q$ that can correct the erasure of $\A$.
Any tuple satisfying this definition is said to realize a programming of access structure set $\mathfrak{A}$.

\end{definition}

Operationally, we can understand condition (i) as follows.
After the encoding $\mc{E}_{\D\Q\to\bm{\B}}$, the programmer Alice chooses her desired access structure $\Gamma_i$ and performs the measurement $\{M_{a|i}\}_a$.  
Given outcome $a$, the joint space $\mc{H}_{\mathsf{A}}\otimes\mc{H}_{\bm{\B}}$ is projected into the subspace of $\mc{H}_{\bm{\B}}$ specified by Eq. \eqref{Eq:PAS}.
The procedure defines a PAS if this subspace is a QSS with access structure $\Gamma_i$.
Alice then reveals the code by announcing the measurement outcome $a$ along with her choice $i$, and authorized subsets within $\Gamma_i$ can recover the secret originally stored in $\mc{H}_\mathsf{Q}$.
Notice that Alice's measurement can occur at any point in time before or after the encoder $\mc{E}$ is applied and the shares are distributed to parties $\bm{\B}$.
Thus, the programming of the access structure can be done asynchronously to the initial reception of the quantum secret in $\mathsf{Q}$ and the final delivery of the shares $\B_i$.

Condition (ii) is natural to impose for consistency.
By monotonicity in access structures, we have that $\{\bm{\B}\}\subset\Gamma_i$ for every access structure $\Gamma_i$.
Hence, the total cohort of parties $\bm{\B}$ should be authorized to recover the secret without needing to wait for the information $a$ and $i$.
Furthermore, similar to the definition of a blind helper code, (ii) enforces a power check on the programmer so that the parties $\bm{\B}$ can still work together to recover the secret independently.

\subsection{The role of quantum steering in PASs}

Before we discuss a method for building PASs from blind helper codes, we explain an interesting connection to the nonlocal effect of quantum steering.
Observe that we can equivalently express Eq. \eqref{Eq:PAS} as
\begin{align}
\label{Eq:BPAS-code-assemblage}
    \mc{V}_{a|i}= \text{supp}\left\{ \mc{E}_{\mathsf{DQ}\to\bm{\B}} (\sigma_{a|i}\otimes\mathbb{I}_{\mathsf{Q}})\right\},
\end{align}
where
\begin{equation}
\label{Eq:state-assemblage}
    \sigma_{a|i}=\tr_{\A}[(M_{a|i}\otimes\mathbb{I}_\D)\psi_{\A\D}].
\end{equation}
Eqns. \eqref{Eq:BPAS-code-assemblage} and \eqref{Eq:state-assemblage} reveal an alternative interpretation of how the programming works.
After Alice decides on the particular access structure she wishes to impose, she measures her subsystem $\A$ in a judiciously chosen basis. 
This effectively steers the dealer's subsystem $\D$ into the state assemblage $\{\sigma_{a|i}\}_{a,i}$ given in Eq. \eqref{Eq:state-assemblage}.

State assemblages are the basic ingredient underlying the notion of quantum steering.
Formally, we say that Alice has steered the dealer's system using entangled state $\psi_{\A\D}$ if the post-measurement state assemblage $\{\sigma_{a|i}\}$ given by Eq. \eqref{Eq:state-assemblage} does not satisfy a local hidden state (LHS) model \cite{Wiseman-2007a}.
The latter means that there does \textit{not} exist a fixed sub-normalized state ensemble $\{\rho_\lambda\}_\lambda$ (satisfying $\sum_\lambda\tr[\rho_\lambda]=1$) and classical post-processing channel $p(a|i,\lambda)$ such that
\begin{align}
\label{Eq:LHS}
    \sigma_{a|i}=\sum_\lambda p(a|i,\lambda)\rho_\lambda,\quad\forall a,i.
\end{align}
The following proposition shows that steerability is a necessary condition of all PAS implementations with $|\mathfrak{A}|> 1$.
\begin{proposition}
\label{Prop:steering}
    Let $\mathfrak{A}=\{\Gamma_i\}$ contain at least two distinct access structures.  Then any PAS that realizes the programming of $\mathfrak{A}$ requires steering of the dealer's system using the shared state $\psi_{\A\D}$.
\end{proposition}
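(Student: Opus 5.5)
We argue by contradiction: suppose the assemblage $\{\sigma_{a|i}\}$ of Eq. \eqref{Eq:state-assemblage} admits an LHS model, $\sigma_{a|i}=\sum_\lambda p(a|i,\lambda)\rho_\lambda$. The idea is that every code $\mc{V}_{a|i}$ is then built from a \emph{fixed} collection of ``hidden'' codes $\mc{W}_\lambda:=\text{supp}\{\mc{E}(\rho_\lambda\otimes\mbb{I}_\Q)\}$, independent of $i$. From this we will show that the access structure $\Gamma_i$ must be independent of $i$, contradicting $|\mathfrak{A}|\geq 2$.

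\textbf{Step 1 (decomposition of the codes).} By linearity of $\mc{E}$ and positivity of all terms, the support of a sum of positive operators is the span of their supports. Hence
\begin{align}
\mc{V}_{a|i}=\sum_{\lambda:\,p(a|i,\lambda)>0}\mc{W}_\lambda .\notag
\end{align}
Summing over $a$ gives $\sum_a\sigma_{a|i}=\sum_\lambda\rho_\lambda=\sigma_\D$, so every $\mc{W}_\lambda$ with $\rho_\lambda\neq 0$ appears in some $\mc{V}_{a|i}$ for each $i$.

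\textbf{Step 2 (subspaces inherit recoverability).} Fix a $\lambda$ with $\rho_\lambda\neq0$. For each $i$ choose $a$ with $p(a|i,\lambda)>0$, so that $\mc{W}_\lambda\subseteq\mc{V}_{a|i}$.

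First, $\mc{W}_\lambda$ is itself a QSS code for $\Q$. If $\rho_\lambda=\sum_k q_k\op{e_k}{e_k}$, then $\mc{E}(\op{e_k}{e_k}\otimes\mbb{I}_\Q)$ has support containing an isometric-like image of $\mc{H}_\Q$. More carefully, this sub-support is still recoverable by the same recovery map $\mc{R}_S$ of $\mc{V}_{a|i}$ for every $S\in\Gamma_i$: Eq. \eqref{Eq:QEC-defn} restricts to subspaces. The issue is dimension and the logical identification. We would rephrase the requirement via the reference-system picture: the state $\mc{E}(\rho_\lambda\otimes\Phi^+_{\wt\Q\Q})$, normalized, is dominated by a multiple of the corresponding state for $\sigma_{a|i}$. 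Recovery of maximal entanglement with $\wt\Q$ by $\mc{R}_S$ is preserved under such domination, because fidelity with a pure target is linear and attains its maximum $1$.

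Therefore, for all $i$: every $S\in\Gamma_i$ recovers $\Q$ from $\mc{E}(\rho_\lambda\otimes\Phi^+)$. By complementarity (Lemma \ref{lem:equiv-cond}(b)), this gives $I(\wt\Q:\overline S)=0$ in that state.

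\textbf{Step 3 (contradiction).} Take $\Gamma_1\neq\Gamma_2$; without loss of generality some $S\in\Gamma_1\setminus\Gamma_2$. Each $\mc{V}_{a|2}$ is a sum of the $\mc{W}_\lambda$. Using Step 2 with $i=1$, each $\mc{W}_\lambda$ is recoverable by $S$.

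I then need recoverability by $S$ to pass from the pieces to $\mc{V}_{a|2}$. This is the main obstacle, since sums of correctable subspaces are not correctable in general. I would exploit that the $\mc{W}_\lambda$ carry the \emph{same} logical system $\Q$ with a classical label $\lambda$ added. Concretely, I would compare with the flagged state $\sum_\lambda p(a|2,\lambda)\,\mc{E}(\rho_\lambda\otimes\Phi^+)\otimes\op{\lambda}{\lambda}$. Here $I(\wt\Q:\overline S\Lambda)=\sum_\lambda p_\lambda I(\wt\Q:\overline S)_\lambda=0$, since $\wt\Q$ is maximally mixed in every branch. By data processing, discarding $\Lambda$ gives $I(\wt\Q:\overline S)=0$ for the state generated by $\sigma_{a|2}$. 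Lemma \ref{lem:equiv-cond} then yields $S$ authorized for $\mc{V}_{a|2}$.

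Carrying this out for all $S$ in both directions forces $\Gamma_1=\Gamma_2$. This contradicts the hypothesis, so the assemblage must be steerable.
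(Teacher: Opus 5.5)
The gap is the final inference of Step 3, ``Lemma~\ref{lem:equiv-cond} then yields $S$ authorized for $\mc{V}_{a|2}$.'' The implication (b)$\Rightarrow$(a) of that lemma is stated for the setup of Fig.~\ref{Fig:QSS-circuit}, where the state on $\wt{\Q}\bm{\B}$ is pure. In your own argument, the state generated by $\sigma_{a|2}$ is a genuine mixture of the branches $\mc{E}(\rho_\lambda\otimes\Phi^+)$. For mixed states, recoverability by $S$ requires $\wt{\Q}$ to decouple from $\overline{S}$ \emph{together with a purifying environment}; $I(\wt{\Q}:\overline{S})=0$ is not enough.

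Your flag computation is fine, but discarding $\Lambda$ is exactly what destroys recoverability. Each branch is recoverable by $S$ only with a $\lambda$-dependent map, and $S$ no longer knows $\lambda$. Concretely, let $\ket{\psi}_{\A\D}=\frac{1}{2}\sum_{k=0}^{3}\ket{kk}$, and let the dealer apply $\ket{k}_{\D}\ket{\phi}_{\Q}\mapsto P_k\ket{\phi}_{\B_1}\ket{k}_{\B_2}$ with $P_k\in\{I,X,Y,Z\}$. A computational-basis measurement by Alice gives codes $\mc{V}_{a|1}$ in which $\B_1$ alone is authorized. The trivial one-outcome POVM instead gives $\sigma_{0|2}=\mbb{I}/4$ and shares $\frac{1}{4}\sum_k P_k\rho P_k\otimes\op{k}{k}$. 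Here every branch is recoverable by $\B_1$ and $I(\wt{\Q}:\B_2)=0$, yet $\B_1$ alone holds $\mbb{I}/2$. The assemblage is obviously LHS, condition (ii) holds, and the two access structures differ. So in the generality you work in, both your intermediate claim and the proposition itself fail. The example is excluded only because $\mc{V}_{0|2}$ is eight-dimensional rather than a $|\Q|$-dimensional pure scheme.

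The ingredient you never use is exactly that standing assumption: each $\mc{V}_{a|i}$ is a pure QSS scheme. That is, the effective encoding $\rho\mapsto\mc{E}(\sigma_{a|i}\otimes\rho)$ is isometric, which for a recoverable encoding is equivalent to $\dim\mc{V}_{a|i}=|\Q|$. Then the normalized state $\omega_{a|i}\propto\mc{E}(\sigma_{a|i}\otimes\Phi^+)$ is pure. Since it is a convex combination of the branch states $\omega_\lambda\propto\mc{E}(\rho_\lambda\otimes\Phi^+)$ with $p(a|i,\lambda)>0$, extremality forces $\omega_\lambda=\omega_{a|i}$. Now take, as in your Step 1, a $\lambda$ that contributes to some outcome $\bar{a}$ of setting $1$ and some outcome $\bar{a}'$ of setting $2$. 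This gives $\omega_{\bar{a}|1}=\omega_{\bar{a}'|2}$, i.e.\ literally the same code with the same encoding, so $\Gamma_1=\Gamma_2$.

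This is the content of the paper's step that both codes must recover ``whatever is encoded in $\mc{V}_{\bar a|i}\cap\mc{V}_{\bar a'|i'}$''. The common piece $\text{supp}\{\mc{E}(\rho_\lambda\otimes\mbb{I}_\Q)\}$ already carries a full copy of $\Q$, so it exhausts both $|\Q|$-dimensional codes. Once purity is used, your domination argument and the flagged mutual-information step become unnecessary.
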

\begin{proof}
    Suppose on the contrary that the state assemblage $\{\sigma_{a|i}\}$ generated by Alice's measurement satisfies a LHS model.
    Then substituting Eq. \eqref{Eq:LHS} into Eq. \eqref{Eq:BPAS-code-assemblage} yields
    \begin{equation}
        \mc{V}_{a|i}=\text{supp}\left\{\sum_\lambda p(a|i,\lambda)  \mc{E}_{\mathsf{DQ}\to\bm{\B}} (\rho_\lambda\otimes\mathbb{I}_{\mathsf{Q}}) \right\}.
    \end{equation}
    Now for any $\rho_\lambda$ and any distinct $i,i'$, we must be able to find outcomes $\overline{a},\overline{a}'$ such that both $p(\overline{a}|i,\lambda)$ and $p(\overline{a}'|i',\lambda)$ are nonzero.
    Since $\supp(A)\subset \supp(A+B)$ for any nonzero positive operators $A$ and $B$, we have 
    \begin{align}
        \text{supp}\left\{\mc{E}_{\mathsf{DQ}\to\bm{\B}} (\rho_\lambda\otimes\mathbb{I}_{\mathsf{Q}}) \right\}\subset\mc{V}_{\overline{a}|i}\cap\mc{V}_{\overline{a}'|i'}\not=\emptyset.
    \end{align}
    But then codes $\mc{V}_{\overline{a}|i}$ and $\mc{V}_{\overline{a}'|i'}$  have the same access structure because they both must allow for the recovery of whatever is encoded in $\mc{V}_{\overline{a}|i}\cap\mc{V}_{\overline{a}'|i'}$ by any authorized set.
    Finally, the $\{\mc{V}_{a|i}\}_a$ must have the same access structure for every $a$ according to the definition of a PAS and likewise for the $\{\mc{V}_{a'|i'}\}_{a'}$.
    Hence $\Gamma_i=\Gamma_{i'}$, which contradicts the assumption $|\mc{A}|>1$.
\end{proof}

\subsection{Building PASs from helper codes}

One straightforward method to construct a PAS is to perform code concatenation starting with a blind QSS helper code and appending to it QSS codes with different access structures.
The method also requires that the helper provides assistance through classical communication alone.
Example \ref{Ex:universal-blind} shows that this latter constraint is not overly restrictive.
Indeed, by using additional $O(|\Q|+n)$ ebits, any QSS helper code can be transformed into a blind code with classical assistance, and so the method described here can also be applied to general helper codes. 

\begin{figure}[t]
\centering
\includegraphics[width=0.45\textwidth]{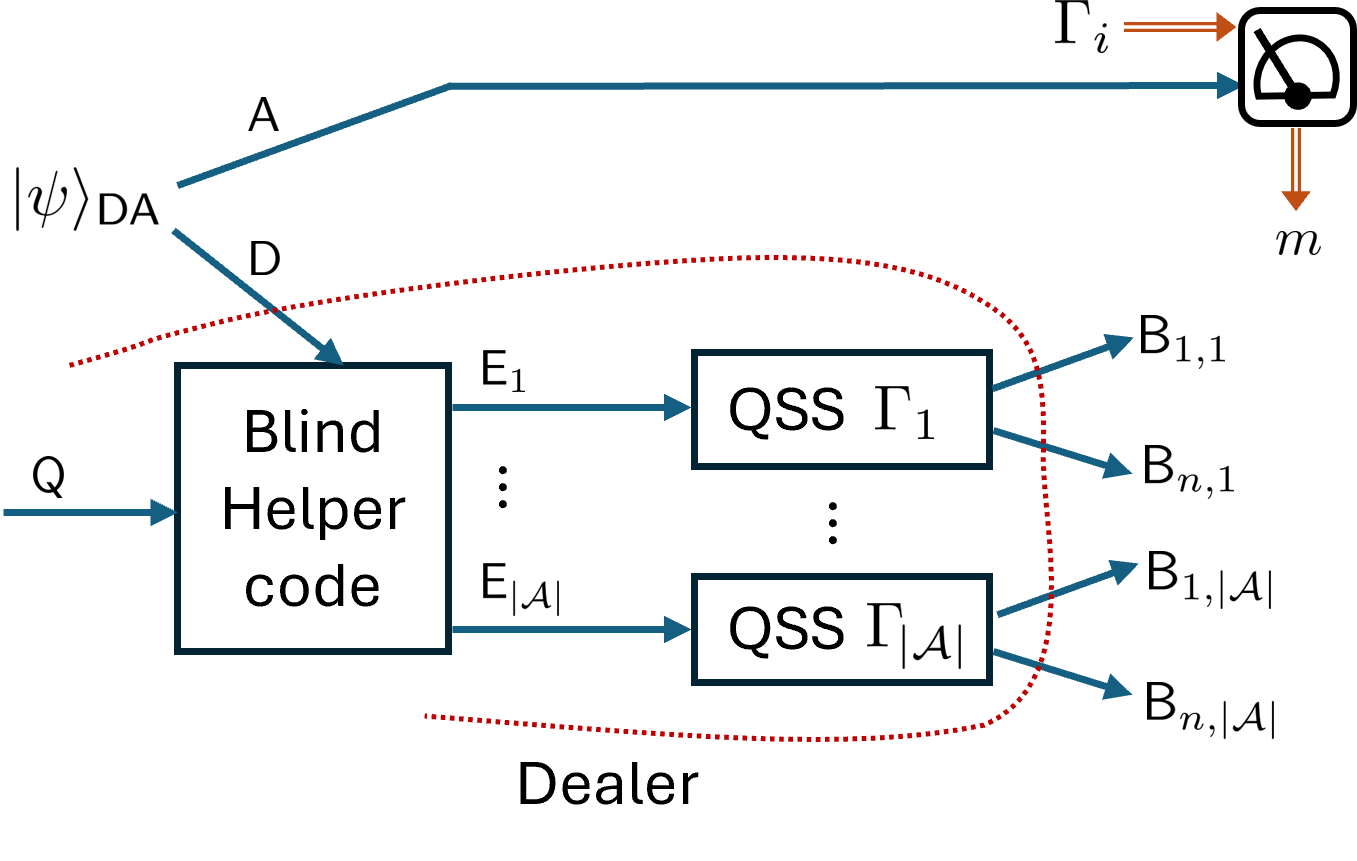}
\caption{A method for building a programmable access structure using a blind helper code.}
\label{Fig:PAS_Blind_concat}
\end{figure}

Figure \ref{Fig:PAS_Blind_concat} depicts the construction with the details given as follows.
Let $\mathfrak{A}=\{\Gamma_i\}$ be any family of valid access structures for parties $\bm{\B}$.
For arbitrary system $\Q$ and any number of parties $\bm{\E}:=\E_1,\E_2,\cdots,\E_{|\mathfrak{A}|}$, choose a blind QSS helper code that encodes $\Q$ into $\A\bm{\E}$.
We can always find such a code using, for example, the teleportation or CSS helper codes 
introduced in Section \ref{Sect:Helper-examples}.
Moreover, since the code is blind, there exists an entangled state $\ket{\psi}_{\A\D}$ such that the encoding has the form of Eq. \eqref{Eq:blind-encoding}, i.e.
\begin{align}
    \ket{i}_{\Q}\mapsto\ket{i}_{\A\bm{\E}}=U_{\D\Q\to\bm{\E}}\ket{\psi}_{\A\D}\ket{i}_{\Q}.
\end{align}
The isometry $U_{\D\Q\to\bm{\E}}$ is performed by the dealer, and it defines the first layer of the encoding map for the PAS.
The state $\ket{\psi}_{\A\D}$ is pre-shared between the programmer and the dealer in the PAS.

After the dealer applies $U$, the second layer of the code is implemented. 
On each pair of systems $\E_i\E_i'$, the dealer performs an encoding map $W_i:\mc{H}_{\E_i}\to\mc{H}_{\B_{1,i}}\otimes\mc{H}_{\B_{2,i}}\otimes\cdots\otimes\mc{H}_{\B_{n,i}}$ of any QSS code for $n$ parties with access structure $\Gamma_i$.
The $k^{th}$ share, denoted by $\B_{k,i}$ is then delivered to party $\B_i$, and its size will depend on the particular QSS code used.
For example, the constructive method for realizing a given access structure $\Gamma_i$ described in Ref. \cite{Gottesman-2000a} involves building up shares through concatenations of threshold codes.
Nevertheless, there always exists a way to encode each pair of subsystems $\E_i\E_i'$ into a QSS code having the desired access structure $\Gamma_i$.
In total then, each party $\B_i$ will receive $|\mathfrak{A}|$ subsystems from the dealer, with the $k^{th}$ subsystem $\B_{k,i}$ being a share of system $\E_i\E_i'$.
We let $\mc{H}_{\bm{\B}}:=\otimes_{k=1}^{n}\otimes_{i=1}^{|\mathfrak{A}|}\mc{H}_{\B_{k,i}}$ and $U_{\text{tot}}:=\otimes_{i=1}^{|\mathfrak{A}|}W_i U$ denote the total encoding for the PAS.

The measurement assemblage that Alice performs to achieve the programming of $\mathfrak{A}$ consists of whatever local measurement she needs to perform in the first layer QSS code (recall we required that this code have a helper that communicates classically).
Specifically, for any access structure $\Gamma_i\in\mathfrak{A}$, she performs the measurement such that the secret can be recovered on systems $\E_i$ given her measurement outcome.
The second layer of encoding $W_i$ distributes the information in $\E_i$ to the different parties $\bm{\B}_i$ while having the desired access structure $\Gamma_i$.
Any authorized set $S\in\Gamma_i$ can recover the information in $\E_i$ and therefore also the original information in $\Q$ given the measurement outcome of the helper.
Since the first layer of the code is a valid helper code, systems $\{\E_j\}_{j\not=i}$ contain no information about $\Q$.
Therefore, all the shares of these systems distributed at the second layer of the code will also contain no information about $\Q$.
Consequently, even when accounting for all the shares of $\{\E_j\}_{j\not=i}$, the access structure for recovering $\Q$ is given by $\Gamma_i$, as desired.

\section{Conclusion}
\label{Sect:Conclusion}

We have introduced helper codes as an experimentally-motivated family of quantum secret sharing codes.
These codes have highly asymmetric access structures, allowing any party to fully recover the encoded information with the assistance of a designated helper.
Such codes may be particularly useful for error correction in hybrid quantum architectures whose likelihood of experiencing erasure depends on the respective underlying physical system.
Given this physical motivation, it would be desirable to obtain more examples of helper codes and identify their structural properties.
We have pursued such a direction in a companion paper to this work \cite{Companion-temp}, and in general we anticipate that interesting connections can be found between the structure of helper codes and features of multipartite entanglement \cite{Scott-2004a}.





Considering a helper scheme for specific applications introduces additional considerations that may be the subject of future work.
For blind helper codes, the helper holds no local information of the secret prior to decoding.
However, if the helper works jointly with a party $\B_i$ to recover the secret, then ultimately some information could be extracted by the helper.
One way to prohibit this information leakage is by using codes in which the helper provides assistance only through one-way classical communication.
While we have shown how every helper code can be converted into one having this form, this generally requires additional Bell states.
It would be very useful to have a better understanding of which helper codes themselves can be decoded with one-way local operations and classical communication (LOCC) from the helper, without needing the teleportation bootstrapping of Example \ref{Ex:universal-blind}.



Finally, we would like to highlight the role that quantum steering plays in our programmable access structure protocol.
It is not surprising that some sort of nonlocal effect is needed for a remote party to control the access structure of a quantum code.
Effectively, the programmer steers the other parties into an assemblage that has the desired access structure.
To our knowledge, this is a novel use-case of quantum steering that is distinct from other previously observed connections between steering and secret sharing \cite{Xiang-2017a, Uola-2020a}. 
As future work, it would be interesting to investigate whether every steerable state can be used to gain some nonclassical advantage in a programmable secret sharing sharing scenario.
Such a result would represent a converse to our Proposition \ref{Prop:steering}.


\bigskip

\acknowledgments
E.C. thanks Alex May for helpful discussions on dimensionality bounds in secret sharing.
S.H. and M.N. were partially supported by NSF Grants 2016136 and 2112890.
D.W.K. was partially supported by NSERC Discovery Grant RGPIN-2024-400160.

\bibliography{refs}
\bibliographystyle{unsrt}

\end{document}